\newtheorem{lemma}{Lemma}
\newtheorem{proposition}{Proposition}
\newtheorem{prop}{Proposition}\def\PRO{\begin{prop}}\def\ORP{\end{prop}}
\newtheorem{coro}{Corollary}\def\COR{\begin{coro}}\def\ROC{\end{coro}}
\newtheorem{theo}{Theorem}\def\TH{\begin{theo}}\def\HT{\end{theo}}
\def\TH{\begin{theo}}\def\HT{\end{theo}}
\newtheorem{defi}[prop]{Definition}\def\DE{\begin{defi}}\def\ED{\end{defi}}
\newtheorem{lemme}[prop]{Lemma}\def\LE{\begin{lemme}}\def\EL{\end{lemme}}
\newcommand{\beq}{\begin{equation}}
\newcommand{\eeq}{\end{equation}}
\begin{document}
\title{Secure quantum signatures using insecure quantum channels}
\author{Ryan Amiri$^1$}
\email{ra2@hw.ac.uk}
\author{Petros Wallden$^2$}
\author{Adrian Kent$^{3,4}$}
\author{Erika Andersson$^1$}
\affiliation{$^1$SUPA, Institute of Photonics and Quantum Sciences, Heriot-Watt University, Edinburgh EH14 4AS, United Kingdom\\
$^2$LFCS, School of Informatics, University of Edinburgh, 10 Crichton Street, Edinburgh EH8 9AB, United Kingdom\\
$^3$Centre for Quantum Information and Foundations, DAMTP, Centre for Mathematical Sciences,  University of Cambridge, Wilberforce Road, Cambridge, CB3 0WA, United Kingdom\\
$^4$Perimeter Institute for Theoretical Physics, 31 Caroline Street North, Waterloo, ON N2L 2Y5, Canada}

\begin{abstract}
Digital signatures are widely used in modern communication to guarantee authenticity and transferability
of messages. The security of currently used classical schemes relies on computational
assumptions. We present a quantum signature scheme that does not
require trusted quantum channels.  We prove {that} it is unconditionally
secure against the most general coherent attacks, {and show that it}   
requires the transmission of significantly fewer quantum states 
than previous schemes. We also show that the quantum channel noise threshold for
our scheme is less strict than for distilling a secure key using quantum key distribution. This
shows that ``direct'' quantum signature schemes can be preferable to signature schemes relying on
secret shared keys generated using quantum key distribution.
\end{abstract}

\maketitle

\section{Introduction}

Signature schemes allow for the exchange of messages from one sender
to multiple recipients, with the guarantee that messages cannot be
forged or tampered with. Additionally, messages can be transferred,
and cannot be repudiated. Transferability means that with a
probability that can be made arbitrarily close to one, if a message is
accepted by an honest recipient, 
it will also be accepted by another recipient if forwarded. 
The related requirement of non-repudiation means that, except with
probability that can be made arbitrarily small, a sender cannot later
successfully deny having sent a signed message.
Digital signatures are widely used for example in e-mail and
electronic commerce, and are considered to be one of the most
important inventions of modern cryptography. Unfortunately, the
security of commonly used signature protocols relies on the assumed
computational difficulty of certain problems. In the United States,
for example, there are currently three approved algorithms for
generating digital signatures -- RSA, DSA and ECDSA -- all of which
rely on the difficulty of finding discrete logarithms or factoring
large primes. With the advent of quantum computers, such assumptions
would no longer be valid. Given the importance of digital 
signatures, there is thus a strong motivation to develop practical
signature schemes whose security is unconditional, i.e. guaranteed by
the laws of physics, without any computational assumptions.

Unconditionally secure ``classical" signature schemes are possible, but need, at the very least, shared secret keys, and often also require a third party trusted by everybody (who effectively can provide each participant with secret information)~\cite{ChaumUSS, HanaokaSig, SwansonStinson, QDSwQKD}. Shared secret keys can of course be generated by quantum key distribution (QKD), so that an unconditionally secure signature scheme can proceed by first generating secret keys via QKD, and then running e.g. the protocol P2 in \cite{QDSwQKD}. Unconditionally secure ``direct" quantum signature schemes proceed without first distilling highly secure shared secret keys~\cite{Gottesman, QDSwMulti,QDSwMultiExp,QDSwQKD}. It is an open question what the best unconditionally secure signature schemes are, with respect to the number of quantum transmissions required per signed bit, trust assumptions, requirements on communication channels, and so on. In this paper, we explicitly demonstrate that ``direct" quantum signature schemes can have advantages over schemes relying on secret shared keys generated via QKD, by showing that the ``direct" scheme we propose can tolerate more noise in the quantum channels.

Previous quantum signature schemes \cite{QDSwMulti,QDSwMultiExp,QDSwQKD} improved on the original Gottesman-Chuang scheme \cite{Gottesman} by removing the need for quantum memory. 
In these quantum signature schemes, Alice encoded her signatures into quantum states and
sent a copy to both Bob and Charlie, who were only able to gain partial information on the overall signature due to the quantum nature of the states. However, the security analysis assumed authenticated quantum channels that did not allow eavesdropping. This strong and generally unrealistic assumption meant that a potential forger (Bob) only had access to his own copy of the signature states sent from Alice. In reality an adversarial Bob would be able to gain extra information on Alice's signature through eavesdropping on the signature states sent from Alice to Charlie.

Here we present a new quantum signature protocol, with three
improvements over previous protocols.  First, we remove all trust
assumptions on the quantum channels.  This is crucial for actual
practical use of quantum signature schemes. Second,
instead of Alice sending the same signature
states to Bob and Charlie, Bob and Charlie send {\it different} states to Alice, which leads to increased efficiency. 
This departs from the ``public-key" principle in the earlier quantum signature schemes.
Third, as already mentioned above, we show that in our direct quantum signature protocol, the noise threshold for the
Alice-Bob and Alice-Charlie quantum channels is less strict than for distilling a secret key using quantum key distribution
(QKD).

\section{The protocol}

We outline our protocol for three parties, with a sender, Alice, and
two receivers Bob and Charlie. Generalisation to more parties is
possible, but special care should be taken to address colluding
adversaries (see e.g. \cite{AWA15}). In the three-party scenario,
  at most one party can be dishonest, since two colluding dishonest
  parties can trivially cheat on the third party. In the multiparty
  scenario, the maximum number of dishonest parties will depend on the
  method of dispute resolution. If a majority vote is used to resolve
  disputes, then a majority of the participants must be honest. Also, 
  transferability and non-repudiation become identical in a three-party
   scenario when majority vote dispute resolution is used.
We assume that between Alice and Bob, and between Alice and Charlie
there exists authenticated classical channels as well as untrusted,
imperfect quantum channels. In addition, Bob and Charlie share a QKD
link which can be used to transmit classical messages in full
secrecy. The protocol makes use of a key-generating protocol (KGP)
performed in pairs separately by Alice-Bob and Alice-Charlie. The KGP
uses the noisy untrusted quantum channels, and generates two
correlated bit strings, one for the sender and one for the
receiver. When the noise level is below the prescribed
  threshold, the Hamming distance between the receiver's string and the
sender's string is smaller than the Hamming distance between any
string an eavesdropper could produce and the sender's string. The KGP
is further discussed below, after presenting the signature protocol
itself.  

The quantum signature protocol has two parts, a distribution stage, where the scheme is set up, and
a messaging stage, when messages are sent and signed. The distribution
stage involves both classical and quantum communication, but all communication in the messaging stage is classical.
We show how to sign a one-bit message. Longer messages can be signed for example by suitably iterating the one-bit protocol, as in \cite{WCRZ14}.

\subsection{Distribution stage}
\noindent(1) For each possible future message $m$=0 or 1, Alice uses
the KGP to generate four different length $L$ keys, $A^B_0, A^B_1,
A^C_0, A^C_1$, where the superscript denotes the participant with whom
she performed the KGP and the subscript denotes the future message, to
be decided later by Alice.
Bob holds the length $L$ strings $K^B_0, K^B_1$ and Charlie holds the 
length $L$ strings $K^C_0, K^C_1$. Due to the KGP, we know that
$A^B_0$ contains fewer 
mismatches with $K^B_0$ than does any string produced by an
eavesdropper, and the 
same applies to the other pairs of strings. Alice's signature for the future message $m$ will be $Sig_m = (A^B_m, A^C_m)$. 
Essentially, what will protect against forging is that only Alice knows a valid signature for a message $m$. \\

\noindent (2) For each future message, Bob and Charlie symmetrise
their keys by choosing half of the bit values in their $K^B_m, K^C_m$
and sending them (as well as the corresponding positions) to the other
participant using the Bob-Charlie secret classical channel. As
explained below, this ensures that Alice cannot make Bob and Charlie
disagree on the validity of a signature if a message is forwarded from Bob to Charlie or vice versa in the messaging stage.  
If Bob (or Charlie) chooses to forward an element of $K^B_m$ (or $K^C_m$) in the distribution stage, he will not further use it to check the validity of a
signature. They will only use the bits they did not forward and
those received from the other participant~\cite{footnote1}.  We denote
their symmetrised keys by $S^B_m$ and $S^C_m$, with the superscript
indicating whether the key is held by Bob or Charlie. Bob (and
Charlie) will keep a record of whether an element in $S^B_m$ (
  $S^C_m$) came
directly from Alice or whether it was forwarded to him by Charlie (or
Bob).

At this point in the protocol, Bob's and Charlie's symmetrised strings each
contain half of $K^B_m$ and half of $K^C_m$. For
each future possible message $m$, Bob and Charlie each have a bit
string of length $L$, and Alice has no information on whether it
is Bob's $S^B_m$ or Charlie's $S^C_m$ that contains a particular
element of the string $(K^B_m, K^C_m)$, {which has length $2L$}. This protects against repudiation. Bob has access to all of $K^B_m$ and half of $K^C_m$, but, even
if he is dishonest, he does not know the half of $K^C_m$ that Charlie
chose to keep. This protects against forging by Bob (and similarly for forging by Charlie).

\subsection{Messaging stage}

\noindent (1) To send a signed one-bit message $m$, Alice sends $(m, Sig_m)$ to the desired recipient (say Bob).\\

\noindent (2) Bob checks whether $(m, Sig_m)$ matches his $S^B_m$ and records the number of mismatches he finds. He separately checks the part of his key received directly from Alice and the part of the key received from Charlie. If there are fewer than $s_a(L/2)$ mismatches in both halves of the key, where $s_a<1/2$ is a small threshold determined by the parameters and the desired security level of the protocol, then Bob accepts the message.\\ 

\noindent (3) To forward the message to Charlie, Bob forwards the pair $(m, Sig_m)$ that he received from Alice.\\

\noindent (4) Charlie tests for mismatches in the same way, but in order to protect against repudiation by Alice he uses a different threshold. Charlie accepts the forwarded message if the number of mismatches in both halves of his key is below $s_v(L/2)$ where $s_v$ is
another threshold, with $0 < s_a < s_v < 1/2$. That the recipients have to use different thresholds or acceptance criteria for messages received directly from the sender and for forwarded messages is a general and necessary feature of unconditionally secure signature schemes~\cite{SwansonStinson, AWA15}. More generally in a multiparty situation, thresholds depend on how many times a message has been forwarded, and the level of mismatches will determine how many times a message can subsequently be forwarded.\\

\section{Key generation protocol}

We now describe how two parties, for now called Alice and Bob, perform the KGP. Essentially, Alice and Bob perform the quantum part of QKD to generate raw keys, but do not proceed to error correction or privacy amplification. This means that Alice and Bob will generate different (but correlated) strings that are not entirely secret. These keys will be the $A^B_i$, $K^B_i$ described above. 
Even though the KGP builds on QKD, the security analysis for the KGP does not follow directly from the security of the QKD protocol. This is because the goal of an adversary in the signature protocol is different from that of an eavesdropper in QKD.
For the signature protocol, what matters is the number of mismatches with a recipient's key; for QKD, what matters is the information an eavesdropper can hold about the key. From the bound on an eavesdropper's min-entropy in QKD, we show how to bound the number of mismatches a forger in our signature protocol can achieve.
Our aim is to show that $d(A^B_i, K^B_i)<d(E_{\text{guess}}, K^B_i)$ except with negligible probability, where $d(.,.)$ is the Hamming distance and $E_{\text{guess}}$ is Eve's attempt at guessing $K^B_i$ (and it may be that Eve is Charlie). In addition to proving the security of the KGP {itself}, the security of the signature protocol {(in which the KGP is used as a subprotocol)} will be proven below in Sec. \ref{sec:Security}. 

In what follows, the underlying QKD protocol upon which the KGP is built will be the prepare-and-measure decoy-state BB84 protocol using weak coherent pulses, described in \cite{Lim}.
Apart from the post-processing, another difference is that here it is Bob who prepares the states and sends them along the quantum channel to Alice. This may not be necessary, but simplifies the security analysis in that a dishonest Alice cannot send the recipients Bob and Charlie entangled states.
Specifically, when the KGP is performed by Bob and Alice, we assume that Bob has a phase-randomised source of
coherent states. The intensity of each light pulse is chosen by Bob to
be either $u_1$, $u_2$, or $u_3$, where $u_1>u_2>u_3$. The
intensities are chosen with probabilities $(p_{u_1}, p_{u_2},
p_{u_3})$. As in \cite{Lim}, we use all intensity levels for key generation. To encode information, Bob randomly selects one of four
possible polarisation states -- $|0_Z\rangle, |1_Z\rangle$ ($Z$ basis)
and $|0_X\rangle=1/\sqrt{2}(|0_Z\rangle+|1_Z\rangle),
|1_X\rangle=1/\sqrt{2}(|0_Z\rangle-|1_Z\rangle)$ ($X$
basis). The $X$ and $Z$ bases are chosen with probabilities $p_X\geq
1/2$ and $p_Z =1-p_Z\leq 1/2$ respectively. {The asymmetric probabilities for the two bases increases the efficiency of the protocol~\cite{LoChauArdehali}. Intensities and
states are chosen independently by Bob to avoid correlations between
intensity and information encoding.} Alice also independently chooses the $X$ and $Z$
 measurement bases with probabilities $p_X$ and $p_Z$ respectively. 
 
For each state sent by Bob, Alice obtains one of four possible outcomes $\{0,1,\emptyset, d\}$ where $0$ and $1$ are the bit values, $\emptyset$ represents no detection and $d$ is a double click event. In the case of double clicks, Alice randomly chooses a bit value.  Alice and Bob {then} announce their basis and intensity choices over an authenticated classical channel. If states are transmitted and then
measured in different bases, or if there is no detection, they are discarded (sifting). The protocol is continued until a sufficient number of measurement outcomes have been obtained for each basis and intensity choice. A raw key is generated by choosing a random sample of size $L+k$ of the $X$ basis counts. The bit string generated by Bob is split into four parts $(V_B, Z_{B}, X_{B, \text{keep}}, X_{B, \text{forward}})$. Alice will hold corresponding strings but with the subscript $B$ replaced by $A$. The $V$ strings have length $k$ and are generated from $X$ basis measurements. They are used to estimate the correlation between Alice{'s} and Bob's strings generated from X basis measurements, after which they are discarded. The $Z$ strings are generated from $Z$ basis measurements. They will be used to quantify the level of eavesdropping by Eve. Roughly speaking, due to the complementary nature of the $X$ and $Z$ bases, eavesdropping must affect the correlations Alice and Bob would expect to see in their states and measurement results, and they can use a measure of their correlations to find a quantitive bound on the min-entropy the eavesdropper has on Bob's $X$ strings. The {two} $X_B$ strings have length $L/2$ and together make up Bob's key, $K^B_i$. {Bob will forward $X_{B,\text{forward}}$ to Charlie (who could in fact be Eve) and will keep the other string, $X_{B,\text{keep}}$, for himself. Bob will no longer use the bits in $X_{B,\text{forward}}$.}

It should be stressed that in signature schemes it
cannot be assumed that Alice and Bob are honest. This is another difference from standard QKD. However, as
explained below, neither {of them} gain from dishonesty during the KGP, and
therefore we can assume that they behave honestly during the KGP stage.

In what follows we will consider the finite case, that is, a finite number of states are sent and measured, with Eve
allowed to perform the most general attack permissible by quantum mechanics {-- so-}called ``coherent" attacks. This means {that} Eve can perform entangling operations on any/all 
states sent over the quantum channel, and at any later time make a general measurement on an ancilla system kept in quantum memory.

Our strategy will be to find Eve's information in terms of her smooth min-entropy, and use that to bound the probability that she can make a signature declaration containing fewer than a certain number of errors. To begin with then, we must find Eve's smooth min-entropy on Bob's key $X_{B, \text{keep}}$. We follow \cite{Lim} and find Eve's smooth min-entropy in the same manner as for decoy state QKD, with the only difference being that here, Bob gives the extra information $X_{B, \text{forward}}$ to Eve. However, since Bob does not subsequently use this part of the key, this can be treated in the same manner as is done for the $V$ string sacrificed for parameter estimation, as detailed more explicitly in Proposition 6 of \cite{TomamichelQcrypt}. Essentially, Eve's smooth min-entropy on $X_{B, \text{keep}}$ can be found using entropic uncertainty relations based on the level of correlation between $Z_B$ and $Z_A$. For ease of notation, we will simply write $X$ instead of $X_{B, \text{keep}}$, and we will denote its length by $n$. 
Eve also gains information from the classical communication between Alice and Bob, which is assumed to be public but authenticated. The classical random variables $V$, $\Theta^n$ and $X_{B, \text{forward}}$ represent the information gained by Eve from parameter estimation, basis declarations in the sifting step and, if Eve is Charlie, the forwarding of $X_{B, \text{forward}}$ by Bob, respectively.

We gather all of Eve's information into one {quantum} system {living in the Hilbert space $\mathcal{H}_E$.
This comprises the space containing Eve's ancilla quantum system following her coherent attack, $\mathcal{H}_{E^\prime}$, as well as the spaces containing the states encoding the strings $V$, $\Theta^n$, and $X_{B, \text{forward}}$, which we assume are known to Eve.}
 As in Appendix B of \cite{Lim}, we find
\beq \label{eq:coh_min_entr}
\begin{split}
& H^\epsilon_{\text{min}}(X|E) \gtrapprox s^L_{X,0} + s^L_{X,1}\big[ 1 - h(\phi^U_{X,1}) \big],
\end{split}
\eeq
Where the inequality holds up to a small additive term proportional to $\log(1/\epsilon)$. Here $s^L_{X,0}$ and $s^L_{X,1}$ are the number of pulses reaching Alice which come from {$0$- and $1$-}photon pulses respectively, and which make up the entries in the string $X$. $\phi^U_{X,1}$ is the phase error rate in $X$ basis measurements coming from single{-}photon pulses. The {superscripts $U$ and $L$} represent worst{-}case scenario estimates consistent with parameter estimation performed on a finite sample (see Appendix \ref{App:estimates}).

Now the question is, given Eve's smooth min-entropy, is it possible to bound the number of errors she is likely to make when guessing Bob's key?
\begin{proposition}
Suppose that Bob and Eve share the state $\rho_{XE}$ where, as above, $X$ is the $n$-bit string representing the part of Bob's key that is not forwarded to Charlie/Eve, and $E$ is the correlated quantum system held by Eve, including all information gained from classical communications. Then, for any eavesdropping strategy, Eve's average probability of making at most $r$ mistakes when guessing $X$ can be bounded as
\begin{equation} \label{eq:result2}
\langle p_r \rangle \leq \sum^r_{k=0} \binom{n}{k} 2^{-H^\epsilon_{min}(X|E)_{\rho}} + \epsilon.
\end{equation}
\end{proposition}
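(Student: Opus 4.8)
The plan is to reduce the task of guessing $X$ to within Hamming distance $r$ to the task of guessing $X$ \emph{exactly}, and then to invoke the operational characterisation of the conditional min-entropy as an optimal guessing probability (König, Renner and Schaffner). Any eavesdropping strategy consists of a POVM $\{M_y\}_{y \in \{0,1\}^n}$ on $E$ producing a guess $Y$, and Eve makes at most $r$ mistakes precisely when $d(X,Y) \le r$, i.e. when the error pattern $e := X \oplus Y$ has Hamming weight at most $r$. Since for fixed $X,Y$ the pattern $e$ is uniquely determined, the events $\{X = Y \oplus e\}$ indexed by distinct $e$ are \emph{mutually exclusive}, and their union over weight-$\le r$ patterns is exactly the event $d(X,Y)\le r$. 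Hence the success probability decomposes \emph{exactly} as a sum over error patterns,
\beq
\langle p_r \rangle = \sum_{e:\, \mathrm{wt}(e) \le r} \Pr[\, X = Y \oplus e \,].
\eeq

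First I would treat each term separately. For a fixed pattern $e$, outputting $Y \oplus e$ (measure, then XOR the publicly fixed string $e$) is itself a legitimate strategy for guessing $X$ exactly, so $\Pr[X = Y \oplus e]$ is bounded by the optimal exact-guessing probability $p_{\mathrm{guess}}(X|E)$. The min-entropy then enters through the identity $p_{\mathrm{guess}}(X|E)_{\rho} = 2^{-H_{\min}(X|E)_{\rho}}$. Since there are $\sum_{k=0}^{r}\binom{n}{k}$ patterns of weight at most $r$, summing immediately gives the claimed bound, but with the \emph{non-smooth} min-entropy in the exponent.

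The remaining step, which produces the additive $\epsilon$, is the passage to the smooth min-entropy. I would fix a (sub-normalised) state $\tilde\rho_{XE}$ attaining the maximum in the definition of $H^\epsilon_{\min}$, so that $H_{\min}(X|E)_{\tilde\rho} = H^\epsilon_{\min}(X|E)_{\rho}$ while $\tilde\rho$ lies within $\epsilon$ of $\rho$. For any fixed strategy the quantity $p_r$ is of the form $\mathrm{Tr}[\Lambda\, \rho_{XE}]$ with $0 \le \Lambda \le I$, namely the $X$-diagonal operator assembled from the $M_y$, so it is $\epsilon$-Lipschitz in the state with respect to the trace norm. Running the exact-guessing argument above on $\tilde\rho$ and then replacing $\tilde\rho$ by $\rho$ at the cost of $\epsilon$ yields the stated inequality for every strategy, and hence in particular for the optimal one.

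The only genuinely delicate point is the reduction in the first paragraph: one must notice that the decomposition over error patterns is over \emph{disjoint} events, so that no union-bound slack is incurred beyond the combinatorial factor $\sum_{k=0}^{r}\binom{n}{k}$, and that each shifted strategy is admissible for the exact-guessing game. The smoothing is routine smooth-entropy calculus and the identity relating $H_{\min}$ to $p_{\mathrm{guess}}$ is quoted rather than reproved; the precise constant multiplying $\epsilon$ will depend on whether the smoothing ball is taken with respect to the trace or the purified distance.
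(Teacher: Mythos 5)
Your proposal is correct, but it reaches the bound by a genuinely different route from the paper. The paper first reduces the quantum problem to a classical one: Eve's optimal measurement turns $\rho_{XE}$ into a classical joint distribution, the data processing inequality gives $H^\epsilon_{min}(X|E)_\rho \leq H^\epsilon_{min}(X|F)_P$, a dedicated lemma (a variant of Lemma 3.1.12 of Renner's thesis) shows the smoothing optimum is attained on classical states, and only then does the combinatorial factor $b^r_n=\sum_{k=0}^r\binom{n}{k}$ appear, by bounding $\sum_{x'\in S^r_x}P'_{X|F=f}(x')\leq |S^r_x|\max_x P'_{X|F=f}(x)$ for each measurement outcome $f$; finally the unnormalised smoothed distribution is related back to the true one at cost $\epsilon$. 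You instead stay at the quantum level: you decompose the event $d(X,Y)\leq r$ into the $b^r_n$ \emph{disjoint} events $X=Y\oplus e$, observe that each shifted strategy is admissible for the exact-guessing game, invoke the K\"onig--Renner--Schaffner identity $p_{\mathrm{guess}}(X|E)=2^{-H_{min}(X|E)}$ term by term, and smooth at the end using that $p_r=\mathrm{Tr}[\Lambda\rho_{XE}]$ with $0\leq\Lambda\leq I$. Your route buys economy: it dispenses entirely with the paper's classicality lemma and with the somewhat awkward manipulation of unnormalised conditional distributions, at the price of quoting the KRS operational interpretation as a black box (and you should note that for the subnormalised smoothing state $\tilde\rho$ only the inequality $p_{\mathrm{guess}}\leq 2^{-H_{min}}$ is needed, which still holds there). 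The paper's route is more self-contained, using only the definition of classical min-entropy and the data processing inequality. Your caveat about the constant in front of $\epsilon$ depending on the choice of distance for the smoothing ball is apt --- the paper is equally loose on this point.
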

The proof of this proposition is given in Appendix \ref{App:lemma}. We can further use Markov's inequality to say that for any $a>0$,
\begin{equation} \label{eq:result3}
\mathbb{P}(\text{Eve makes fewer than $r$ mistakes}) := p_r \leq a 
\end{equation}
except with probability at most
\beq
\epsilon_F := \frac{1}{a} \left( b^r_n 2^{-H^\epsilon_{min}(X|E)_{\rho}} + \epsilon\right)
\eeq
Where $b^r_n := \sum^r_{k=0} \binom{n}{k}$ and for large $n$, we have $b^r_n \approx 2^{nh(r/n)}$. So we have found a bound on the probability of Eve making fewer than $r$ mistakes in terms of her smooth min-entropy. Using this, as well as \eqref{eq:coh_min_entr} for the min-entropy, we find
\beq \label{eq:forgeprob}
\begin{split}
\epsilon_F = \frac{1}{a} \left[2^{-n\left\{c^L_{X,0} + c^L_{X,1} [1-h(\phi^U_{X,1})] - h(r/n)\right\}} + {\epsilon}\right],
\end{split}
\eeq
where $c^L_{X,i} := s^L_{X,i}/n$ is the lower bound on the count rate for X basis pulses containing $i$ photons. The equation above should technically have an approximation sign rather than an equality since we have used the approximate bound on the min-entropy from Eq. \eqref{eq:coh_min_entr}. It can be made exact by including the terms proportional to $\log(1/\epsilon)$ in the min-entropy, however, for simplicity we have neglected such terms in the main body of the paper. The condition
\beq
c^L_{X,0} + c^L_{X,1} [1-h(\phi^U_{X,1})] - h(r/n) > 0
\eeq
determines whether or not Eve is able to make fewer than $r$ errors with non-negligible probability. If the condition holds, $n$ can be increased to make Eve's probability of making fewer than $r$ errors arbitrarily small. We define $p_E$ by the equation
\beq \label{eq:EveError}
c^L_{X,0} + c^L_{X,1} [1-h(\phi^U_{X,1})] - h(p_E) = 0.
\eeq
The meaning of this is that $p_E$ is the minimum rate at which Eve can make errors (except with negligible probability). Suppose the error rate on X basis measurements between Alice and Bob is upper bounded as $e^U_X$. As long as $p_E>e^U_X$, there exists a choice of parameters and a sufficiently large signature length which makes the protocol secure (see Section \ref{sec:Security}). Equivalently, QDS is possible as long as
\beq \label{eq:coh_range}
c^L_{X,0} + c^L_{X,1} [1-h(\phi^U_{X,1})] - h(e^U_X) > 0.
\eeq 

\section{Security analysis}
\label{sec:Security}

We will now prove the security of the main signature protocol, i.e. the robustness (probability of an honest run aborting), security against forging (probability that a recipient generates a signature, not originating from Alice, that is accepted as authentic) and repudiation
(or transferability) (probability that Alice generates a signature that is accepted by Bob but then when forwarded, is rejected by Charlie). In what follows, we assume that Alice-Bob and Alice-Charlie have each used the KGP to generate length $L$ bit strings to use in the QDS protocol described above.

\emph{(a) Robustness}. Bob rejects a signed message if the 
{$L/2$} bits received from either Alice or Charlie have a mismatch rate higher than $s_a$ with Alice's signature. From parameter estimation performed on the strings $V_A, V_B$ (whose length we denote by $k$), Alice and Bob obtain an estimate of the error rate they have {with respect to each other, for}
the strings they generated in the $X$ basis. We denote the observed error rate by $\tilde{e}_X$. Using the Serfling inequality \cite{Serfling}, we can bound the actual error rate between the strings $X_{A, \text{keep}}$ and $X_{B, \text{keep}}$ (which we denote as $e_X$) by
\beq \label{eq:pe}
e_X \leq \tilde{e}_X + \delta := e^U_X,
\eeq
where
\beq
\delta := \sqrt{\frac{\ln(1/\epsilon_{PE})}{2k}\left(1-\frac{k-1}{n}\right)}.
\eeq
This bound holds except with probability $\epsilon_{PE}$. It can be seen that for any fixed choice of $\delta$, the failure probability $\epsilon_{PE}$ decays exponentially fast in the parameter $k$.
Let $e^U_{X,B}$, $e^U_{X,C}$ be the worst-case error rates Alice has from performing separate KGP's with Bob and Charlie respectively. Set $e^U_{X} := \max\{e^U_{X,B}, e^U_{X,C}\}$ and choose $s_a$ such that $s_a > e^U_{X}$. The Serfling inequality tells us that the true error rate between Alice's and Bob's keys will be less than $e^U_X$ except with probability at most $\epsilon_{PE}$, so the probability of an honest abort is simply
\begin{equation} \label{eq:honab}
\mathbb{P}(\text{Honest Abort}) \leq 2\epsilon_{PE},
\end{equation}
where the factor of 2 occurs since the abort can be due to either the states received from Alice or the states received from Charlie.

\emph{(b) Security against forging.} It is easier for either Bob or Charlie to forge than for any other external party, and we will therefore consider forging by an internal party. 
In order to forge a message, Bob must give a declaration $(m, Sig_m)$ to Charlie that has fewer than $s_vL/2$ mismatches with the unknown (to Bob) half of $S^C_m$ sent directly from Alice to Charlie, and fewer than $s_vL/2$ mismatches with the half he himself forwarded to Charlie. An adversarial Bob will obviously be able to meet the threshold on the part he forwarded to Charlie. We therefore consider only the unknown half that was received directly from Alice. If parameter estimation is successful in the KGP, then we know the worst-case (maximum) rate at which Alice will make errors with Charlie's key; we denote it by $e^U_{X}$. From Eq. \eqref{eq:EveError}, we also know the minimum rate at which Bob will make errors with Charlie's key; we denote it by $p_E$.

Assuming \eqref{eq:coh_range} holds, we choose $s_v$ such that $e^U_{X} < s_v < p_E$. In this case, Charlie will likely accept a legitimate signature sent by Alice, since the upper bound on their error rate, $e^U_X$, is less than the threshold $s_v$. On the other hand, Charlie will likely reject any dishonest signature declaration by Bob, since the probability of Bob finding a signature with an error rate smaller than $s_v$ is restricted by equations \eqref{eq:result3} and \eqref{eq:forgeprob} as
\beq \label{eq:forge1}
\mathbb{P}(\text{Eve makes fewer than $s_vL/2$ errors}):=p_{s_vL/2} \leq a
\eeq
except with probability at most
\beq
\epsilon_F := \frac{1}{a} \left[2^{\frac{L}{2}h(s_v) - H^\epsilon_{\text{min}}(X|E)_\rho } + {\epsilon}\right] \\
\eeq
Let us suppose that if any of the parameter estimation procedures fail (so for example if $e^U_X$ is not a good upper bound), or if $p_{s_vL/2} \geq a$, then Bob is able to successfully forge with certainty. We are then able to bound Bob's probability of successfully forging as
\begin{equation} \label{eq:forge}
\mathbb{P}(\text{Forge}) \leq a+\epsilon_F + 8\epsilon_{PE}.
\end{equation}
This equation is valid for any choice of $a,\epsilon, \epsilon_{PE} > 0$ and so can be made arbitrarily small by increasing $L$. 
The addition of $8\epsilon_{PE}$ is to account for the possibility of the upper/lower bounds failing on any of the $e_X$, $s_{X,0}, s_{X,1}$ or $\phi_{X,1}$ (see Appendix \ref{App:estimates}).
Note that security against an adversarial Bob
derives entirely from the Alice-Charlie KGP, in which Bob is already
assumed to be an adversary. Thus, any dishonesty on Bob's part during
the Alice-Bob KGP cannot help him to forge. Exactly the same arguments apply when Charlie is the forger.

\emph{(c) Security against repudiation.} 
Alice aims to send a declaration $(m, Sig_m)$ which Bob will accept and, when forwarded, Charlie will reject. To do this, we must have that Bob accepts both the elements that Alice sent directly to him and the elements that Charlie forwarded to him. In order for Charlie to reject he needs only reject one of either the elements he received from Alice, or the elements Bob forwarded to him. Intuitively, security against repudiation follows because of the symmetrisation performed by Bob and Charlie using the secret classical channel. Even if Alice knows and can control the error rates between $A^B_m$, $A^C_m$ and $K^B_m$, $K^C_m$, she cannot control whether the errors end up with Bob or Charlie. After symmetrisation, the keys $S^B_m$ and $S^C_m$ will each have the same expected number of errors. To repudiate, 
one must contain significantly more errors than the other. Using results in \cite{Chvatal}, we can bound this probability as
\begin{equation} \label{eq:rep}
\mathbb{P}(\text{Repudiation}) \leq 2\exp\left[-\frac{1}{4}(s_v-s_a)^2L\right].
\end{equation}
For a more formal proof, please see Appendix \ref{App:AppendixB}. Note that security against repudiation derives entirely from the symmetrisation performed by Bob and Charlie, in which Alice plays no part. Even if Alice can control the choices of $s_a$, $s_v$ by manipulating the error rates achieved during the Alice-Bob KGP and the Alice-Charlie KGP, the choice of $L$ depends on $s_a$ and $s_v$ and the protocol will be secure for any valid choice.

\section{Comparison to QKD} 

For the BB84 protocol performed using decoy states as described above, with a finite number of states sent and received, Appendix B of \cite{Lim} gives the length of the extractable secret key as
\begin{equation} \label{eq:9}
\begin{split}
l &= \left\lfloor s^L_{X,0} + s^L_{X,1}\left[1-h(\phi^U_{X,1})\right] - \lambda_{EC} - \log \frac{2}{\epsilon_{cor}(\alpha_2\alpha_3\nu)^2} \right\rfloor \\
& \approx s^L_{X,0} + s^L_{X,1}\left[1-h(\phi^U_{X,1})\right] - \lambda_{EC},
\end{split}
\end{equation}
where $\epsilon_{cor}$ and $\nu$ are constants related to the possibility of failure of error correction and privacy amplification. The term $\lambda_{EC}$ represents the information leaked to Eve during error correction. It depends on the specific implementation, but must be greater or equal to $nh(e^U_X)$, where $n$ is the size of the bit string being corrected. In practice, error correction will not be perfect and it is common to write $\lambda_{EC} = nf_{EC}h(e^U_X)$ where $f_{EC}$ is a leakage parameter. To perform error correction, the total key is split into blocks and the leakage parameter, $f_{EC}$, depends on this block size, but not the overall length of the key. Increasing the block size reduces $f_{EC}$ at the cost of decreasing the efficiency of the error correction protocol. Estimates of $f_{EC}$ for practically feasible error correction is an area of active research \cite{IR}, though it is commonly estimated to be in the range $1.11 - 1.2$, regardless of the length of the total key being distilled. For example, \cite{Finite_renner} assumes $f_{EC}=1.2$ based on the performance of error-correcting codes in use at ID Quantique. Rewriting \eqref{eq:9}, we obtain
\begin{equation} \label{eq:Finite_qkd}
l \approx n \left\{c^L_{X,0} + c^L_{X,1}\left[1-h(\phi^U_{X,1})\right] - f_{EC}h(e^U_X)\right\}.
\end{equation}
Comparing equations \eqref{eq:coh_range} and
\eqref{eq:Finite_qkd}, we immediately see that there are Alice-Bob and
Alice-Charlie quantum channels for which quantum signatures are
possible and yet practical QKD gives a zero key generation rate. As stated above, $f_{EC}$ is
independent of $n$ and so cannot be decreased by simply increasing the
size of the total key. The important point is that because the quantum
signature scheme omits the inefficient process of error correction,
there should always be some region where quantum signature generation is possible but QKD is not.

\section{Discussion}

In this paper we have presented a quantum 
signature protocol and proven it unconditionally secure
against coherent attacks.   It improves on previous quantum signature
protocols by removing all trust assumptions on the quantum channels
between participants. One might have expected that a protocol that
does not rely on these assumptions must necessarily be less
efficient.  However, our protocol also significantly reduces the
length of the signature needed to sign a
message. To facilitate comparison to previous
quantum signature protocols, suppose one wants the probabilities in
\eqref{eq:honab}, \eqref{eq:forge}, \eqref{eq:rep} to all be below
$10^{-4}$.  Using realistic experimental quantities (taken from \cite{Lucamarini}), we estimate that a signature length of $L=7.71\times 10^5$ {(for each of the  possible one-bit messages 0 and 1)} is required to securely sign a one-bit message, sent over a distance of 50~km. This would
require Bob/Charlie to transmit approximately $6.3\times10^8$ quantum states (per bit to be signed)
to Alice during their KGP's (see Appendix \ref{App:AppendixC}). We compare this to previous quantum
signature protocols which required $O(10^{10})$ states to be
transmitted to achieve the same level of security over 1~km
\cite{QDS_Exp}.

The increase in efficiency is largely due to the fact that in our
protocol Bob and Charlie send Alice \textit{different} states,
whereas in previous protocols Alice sent Bob and Charlie the same signature states. In those
protocols, even without any eavesdropping, a potential forger has
access to a legitimate copy of each of the states Alice sent to the
participants, and thus to reach the same levels of security requires longer
signatures. Moreover, when generalising to $N$ participants with up to
$t$ dishonest parties, potentially colluding forgers are even more powerful, since
 they may have $t$ legitimate copies of each state. In
our protocol, where different states are sent by each participant, this
problem is evaded. The only source of information for a potential
forger is by eavesdropping on the quantum channels, an activity
{not even considered} 
in previous protocols due to the assumption of
``authenticated" quantum channels.

We also showed that the noise threshold in the quantum
channels connecting Alice-Bob and Alice-Charlie is in
practice less strict for
quantum signatures than for distilling a secret key using QKD. For some quantum channels, therefore, quantum signature
protocols that use QKD (e.g. P2 of \cite{QDSwQKD}) are not possible,
while our direct quantum protocol remains possible. This is an example
that direct quantum protocols are sometimes preferable to
protocols relying on secret shared keys generated using standard QKD,
and highlights that quantum signature protocols are not in general merely a
direct combination of QKD protocols and classical post-processing.

\begin{acknowledgments} The authors would like to thank Vedran Dunjko for discussions. This work was supported by the UK Engineering and Physical Sciences Research Council (EPSRC) under EP/G009821/1 and EP/K022717/1. P.W. gratefully acknowledges support from the COST Action MP1006. A.K. was partially supported by a grant from FQXi and by Perimeter Institute for Theoretical Physics. Research at Perimeter Institute is supported by the Government of Canada through Industry
Canada and by the Province of Ontario through the Ministry of Research and Innovation.
\end{acknowledgments}


\appendix

\section{Finite-size estimates}\label{App:estimates}

Equation \eqref{eq:coh_min_entr} contains three quantities to estimate -- $s^L_{X,0}$ and $s^L_{X,1}$, which are estimates of the number of counts (sent and measured in the X basis) containing zero and one photon respectively; and $\phi^U_{X,1}$ which is an estimate of the phase error rate in the X basis counts. 

As in \cite{Lim}, we have
\beq \label{eq:A4}
s_{X,0} \geq \frac{\tau_0}{u_2 - u_3}\left(\frac{u_2e^{u_3}n^*_{X,u_3}}{p_{u_3}} - \frac{u_3e^{u_2}n^*_{X,u_2}}{p_{u_2}}\right)
\eeq
where $n^*_{X,u_k}$ is the number of counts (from states sent and measured in the X basis) coming from pulses with intensity $u_k$, and $\tau_{n} := \sum_{u_k} p_{u_k}e^{-u_k}u^n_k/n!$. This formula is valid in the asymptotic limit where the number of counts will be equal to the expected value. In the finite setting, we cannot know with certainty the actual value of $n^*_{X,u_k}$. This is because once the raw key is generated, we randomly choose a finite sample of $L+k$ states from X basis counts. Nevertheless we are able to bound $n^*_{X,u_k}$ {from} above and below with high probability using the statistics observed in parameter estimation. Specifically, if $n_{X,u_k}$ are the observed statistics, Hoeffdings inequalities \cite{Hoeffding} give
\beq \label{eq:A7}
\begin{split}
&n^-_{X,u_k}:= n_{X,u_k}-\delta(L+k, \epsilon_{PE}) \leq n^*_{X,u_k} \\
&n^+_{X,u_k}:= n_{X,u_k}+\delta(L+k, \epsilon_{PE}) \geq n^*_{X,u_k}.
\end{split}
\eeq
These bounds hold with probability at least $1-\epsilon_{PE}$, where $\delta(n_X,\epsilon_{PE}) := \sqrt{n_X\ln(1/\epsilon_{PE})/2}$. Replacing the $n^*_{X,u_k}$ in Eq. \eqref{eq:A4} by the corresponding worst-case finite-size estimate leads to a finite-size lower bound on $s_{X,0}$, which we call $s^L_{X,0}$, and which holds with probability at least $1-2\epsilon_{PE}$.

Similarly, we can bound $s^L_{X,1}$ as
\beq \label{eq:A5}
\begin{split}
s^L_{X,1} \geq & \frac{u_1\tau_1}{u_1(u_2-u_3) - (u^2_2 - u^2_3)}\Bigg[\frac{e^{u_2}n^-_{X,u_2}}{p_{u_2}} - \frac{e^{u_3}n^+_{X,u_3}}{p_{u_3}}\\
& + \frac{u^2_2-u^2_3}{u^2_1}\left(\frac{s^L_{X,0}}{\tau_0}-\frac{e^{u_1}n^+_{X,u_1}}{p_{u_1}}\right)\Bigg].
\end{split}
\eeq

The $X$ basis phase errors are not directly observed in the protocol. Instead, we relate $\phi^U_{X,1}$ to the bit error rate in the $Z$ basis. As in Appendix B of \cite{Lim}, we have
\beq \label{eq:B2}
\phi^U_{X,1} \leq \frac{v^U_{Z,1}}{s^L_{Z,1}} + \gamma\left(\alpha_1, \frac{v^U_{Z,1}}{s^L_{Z,1}}, s^L_{Z,1}, s^L_{X,1}\right),
\eeq
where $v^U_{Z,1}$ is the upper bound on the number of bit errors in Z basis counts coming from single photon pulses, and
\beq
\gamma(a,b,c,d):= \sqrt{\frac{(c+d)(1-b)b}{cd\ln 2}\log\left[\frac{c+d}{cd(1-b)b}\frac{1}{a^2}\right]},
\eeq
where $\alpha_1$ comes from the calculation of the min-entropy given in \cite{Lim}, and is such that $\alpha_1>0, \epsilon>2\alpha_1+\alpha_2+\alpha_3$. Here $\epsilon$ is the smoothing parameter in the smooth min-entropy.

All quantities on the right hand side of Eq. \eqref{eq:B2} are known, except $v^U_{Z,1}$ which we can find as
\beq
v^U_{Z,1} \leq \frac{\tau_1}{u_2 - u_3} \left(\frac{e^{u_2}m^+_{Z,u_2}}{p_{u_2}} - \frac{e^{u_3}m^-_{Z,u_3}}{p_{u_3}}\right),
\eeq
where the $m^\pm_{Z,u_k}$ are the upper and lower bounds on the true number of bit errors from Z basis counts of intensity $u_k$. These are found from the observed number plus finite-size variations, similar to Eq. \eqref{eq:A7}.

\section{Proof of Proposition 1}\label{App:lemma}
In order to guess $X$ making fewer than $r$ errors, Eve will perform some optimal measurement on her system $E$ and from that gain a classical outcome $F$ which is her guess for $X$. This transforms $\rho_{XE}$ to the classical state $\tau_{XF}$ which can be represented by the probability distribution $P_{XF}$. From the data processing inequality \cite{DPI} we have
\beq
H^\epsilon_{min}(X|E)_\rho \leq H^\epsilon_{min}(X|F)_P.
\eeq
We now use the following lemma, similar to Lemma 3.1.12 from \cite{RennerThesis}:
\begin{lemma}
Let $\tau_{XF}$ be a classical state. Then the maximisation in the smooth min entropy, 
$$H^\epsilon_{min}(X|F)_\tau := \max_{\overline{\tau}_{XF}\in B^\epsilon(\tau_{XF})} \sup_{\sigma_F} H_{min}(\overline{\tau}_{XF}|\sigma_F),
$$ 
is achieved for a classical $\overline{\tau}_{XF}$ and a classical $\sigma_F$. Note that the supremum over $\sigma$ is over all density matrices with trace 1.
\end{lemma}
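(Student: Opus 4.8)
The plan is to exploit the classical structure of $\tau_{XF}$ through a dephasing (pinching) channel and to show that applying it can only improve \emph{both} quantities that define the smooth min-entropy, so that the joint maximisation may be restricted to classical states without loss. I would begin by recalling the operator characterisation of the (non-smooth) conditional min-entropy, $H_{min}(\omega_{XF}|\sigma_F) = \sup\{\lambda : \omega_{XF} \leq 2^{-\lambda}\,\mathbb{1}_X \otimes \sigma_F\}$, and introducing the completely positive trace-preserving map $\mathcal{P} := \mathcal{P}_X \otimes \mathcal{P}_F$ that dephases the $X$ and $F$ registers in the bases in which $\tau_{XF}$ is diagonal. Since $\tau_{XF}$ is classical it is a fixed point, $\mathcal{P}(\tau_{XF}) = \tau_{XF}$, and this fact is what drives the whole argument.

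First I would take an arbitrary feasible pair $\overline{\tau}_{XF} \in B^\epsilon(\tau_{XF})$ and $\sigma_F$, and define the dephased versions $\overline{\tau}'_{XF} := \mathcal{P}(\overline{\tau}_{XF})$ and $\sigma'_F := \mathcal{P}_F(\sigma_F)$, both of which are manifestly classical. Because $\mathcal{P}_F$ is trace-preserving, $\sigma'_F$ is still a normalised density operator and hence an admissible choice in the supremum over $\sigma$. Next I would verify that $\overline{\tau}'_{XF}$ remains in the $\epsilon$-ball: using monotonicity of the distance defining $B^\epsilon$ (the trace distance, or equivalently the purified distance) under CPTP maps together with the fixed-point property, one has $P(\mathcal{P}(\overline{\tau}_{XF}), \tau_{XF}) = P(\mathcal{P}(\overline{\tau}_{XF}), \mathcal{P}(\tau_{XF})) \leq P(\overline{\tau}_{XF}, \tau_{XF}) \leq \epsilon$, so feasibility is preserved.

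I would then show the objective does not decrease. If $\lambda = H_{min}(\overline{\tau}_{XF}|\sigma_F)$, then $\overline{\tau}_{XF} \leq 2^{-\lambda}\,\mathbb{1}_X \otimes \sigma_F$; applying the positive map $\mathcal{P}$ to both sides preserves the semidefinite ordering, and since $\mathcal{P}_X(\mathbb{1}_X) = \mathbb{1}_X$ this yields $\overline{\tau}'_{XF} \leq 2^{-\lambda}\,\mathbb{1}_X \otimes \sigma'_F$, i.e. $H_{min}(\overline{\tau}'_{XF}|\sigma'_F) \geq \lambda$. Combining this with the ball-membership step shows that replacing any feasible pair by its dephased image keeps it feasible and cannot lower the value, so the $\max$ over $\overline{\tau}$ and the $\sup$ over $\sigma$ are both attained on classical arguments, which is the claim.

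The one genuinely delicate point, and the step I would treat most carefully, is the simultaneous handling of the two constraints under a single map: the same pinching must keep the smoothed state inside $B^\epsilon(\tau_{XF})$ (which relies on monotonicity of the chosen distance \emph{and} on $\tau$ being a fixed point) and preserve the positive-semidefinite inequality defining $H_{min}$ (which relies on positivity of $\mathcal{P}$ and on $\mathcal{P}_X(\mathbb{1}_X) = \mathbb{1}_X$). I would also note, as a routine matter, that if the supremum defining $\lambda$ is not exactly attained one argues with $\lambda - \eta$ for arbitrarily small $\eta > 0$, which introduces no difficulty.
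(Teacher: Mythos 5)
Your proposal is correct and follows essentially the same route as the paper: dephasing (projectively measuring) in the classical basis of $\tau_{XF}$, using the fixed-point property together with monotonicity of the distance to keep the smoothed state in $B^\epsilon(\tau_{XF})$, and using positivity of the pinching to show the min-entropy does not decrease. The only cosmetic difference is that you prove the min-entropy monotonicity step directly from the operator characterisation, whereas the paper cites Lemma 3.1.12 of Renner's thesis for exactly that inequality.
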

\begin{proof}
To prove this, we will show that for \textit{any} $\overline{\tau}^\prime_{XF} \in B^\epsilon(\tau_{XF})$ and $\sigma^\prime_F$, there exists a classical $\overline{\tau}_{XF}\in B^\epsilon(\tau_{XF})$ and a classical $\sigma_F$ such that 
$$H_{min}(\overline{\tau}_{XF}|\sigma_F) \geq H_{min}(\overline{\tau}^\prime_{XF}|\sigma^\prime_F).
$$ 
To do this, define $\mathcal{E} := \mathcal{E}_{XF}$, the projective measurement in the $XF$ basis. Choose $\overline{\tau}_{XF} := \mathcal{E}(\overline{\tau}^\prime_{XF})$ and $\sigma_F := \mathcal{E}_F(\sigma^\prime_F)$, where $\mathcal{E}_F$ is the projective measurement in the $F$ basis. Since $\mathcal{E}_F$ is a CPTP map, $\sigma_F$ still has unit trace. Also, we have 
$$ ||\overline{\tau}_{XF} - \tau_{XF}||_1 = ||\mathcal{E}(\overline{\tau}^\prime_{XF} - \tau_{XF})||_1 \leq ||\overline{\tau}^\prime_{XF} - \tau_{XF} ||_1 \leq \epsilon,
$$
 where the first equality follows from the definition of $\mathcal{E}$ and because $\tau_{XF}$ is classical. The first inequality follows because the trace distance can only decrease under CPTP maps, and the second inequality follows because $\overline{\tau}^\prime_{XF} \in B^\epsilon(\tau_{XF})$. This shows that $\overline{\tau}_{XF} \in B^\epsilon(\tau_{XF})$.

Now we use lemma 3.1.12 from \cite{RennerThesis} to say that 
$$H_{min}(\overline{\tau}_{XF}|\sigma_F) \geq H_{min}(\overline{\tau}^\prime_{XF}|\sigma^\prime_F)$$
 is true if 
 $$ 1_X \otimes \sigma_F - \mathcal{E}(1_X \otimes \sigma^\prime_F) \geq 0.
 $$ 
 Plugging in the definition of $\mathcal{E}$ we find 
 $$1_X \otimes \sigma_F - \mathcal{E}_{XF}(1_X \otimes \sigma^\prime_F) = 1_X \otimes \sigma_F - 1_X \otimes \sigma_F = 0,
 $$ 
 where we have used that $\mathcal{E}_{XF} = \mathcal{E}_X \otimes \mathcal{E}_F$ when applied to product states.
\end{proof}
This lemma means that
\beq \label{eq:probdistns}
H^\epsilon_{min}(X|F)_P = H_{min}(X|F)_{P^\prime}
\eeq
for some classical (possibly unnormalised) probability distribution $P^\prime_{XF}$. To start with, let us assume that following Eve's optimal strategy, her guess, $F$, is jointly distributed with $X$ according to $P^\prime_{XF}$. In fact, they will be distributed according to some unknown probability distribution $P_{XF}$, but $P^\prime_{XF}$ is $\epsilon$-close to $P_{XF}$ in terms of $L_1$ (or trace) distance. Note that the trace distance makes sense even for unnormalised distributions.

Let us introduce the notation
\beq
b^r_n := \sum^r_{k=0} \binom{n}{k},
\eeq
\beq
S^r_x = \{x^\prime \in \mathcal{X} : d(x, x^\prime) \leq r\},
\eeq
where $d$ is the Hamming distance. 
Under the distribution $P^\prime_{XF}$, Eve's average ``probability" (note again that $P^\prime$ may not be normalised, but we will relate it to the normalised probability distribution $P$) of making at most $r$ mistakes, $\langle p_r\rangle_{P^\prime}$, can be bounded as
\beq
\begin{split}
\langle p_r\rangle_{P^\prime} &= \sum_f P^\prime_F(f) \max_x \sum_{x^\prime\in S^r_x} P^\prime_{X|F=f}(x^\prime) \\
& \leq \sum_f P^\prime_F(f) |S^r_x| \max_{x}  P^\prime_{X|F=f}(x) \\
& = b^r_n \sum_f P^\prime_F(f) \max_{x}  P^\prime_{X|F=f}(x) \\
&:= b^r_n 2^{-H_{min}(X|F)_{P^\prime}},
\end{split}
\eeq
where $P^\prime_{F}$ is the marginal distribution of $P^\prime_{XF}$ and the last inequality follows from the definition of min-entropy on classical systems \cite{KRS2008}. 

Now, in fact the distribution shared by Bob and Eve following Eve's optimal strategy is not $P^\prime_{XF}$, but $P_{XF}$ where $P_{XF}$ is $\epsilon$-close to $P^\prime_{XF}$. We can use the above bound on $\langle p_r\rangle_{P^\prime}$ to get a bound for $\langle p_r\rangle_{P}$ as follows,
\begin{equation} \label{eq:D8}
\begin{split}
\langle p_r \rangle_{P^\prime} &= \sum_f P^\prime_F(f) \max_x \sum_{x^\prime\in S^r_x} P^\prime_{X|F=f}(x^\prime)\\
&=\sum_f P^\prime_F(f) \max_x \sum_{x^\prime\in S^r_x} \frac{P^\prime_{XF}(x^\prime, f)}{P^\prime_F(f)}\\
&=\sum_f \max_x \sum_{x^\prime\in S^r_x} P^\prime_{XF}(x^\prime,f).
\end{split}
\end{equation}
Let $f\in\{f_1, f_2,...\}$ and let $x^\prime_i$ and $x_i$ be such that
\beq
\max_x \sum_{x^\prime\in S^r_x} P^\prime_{XF}(x^\prime, f_i) = \sum_{x\in S^r_{x^\prime_i}} P^\prime_{XF}(x, f_i),
\eeq
\beq
\max_x \sum_{x^\prime\in S^r_x} P_{XF}(x^\prime, f_i) = \sum_{x\in S^r_{x_i}} P_{XF}(x, f_i),
\eeq
I.e. $x^\prime_i$ and $x_i$ specify the sets, $S^r_x$, which maximise the sum in the last equality of \eqref{eq:D8} for distributions $P^\prime_{XF}$ and $P_{XF}$ respectively. Continuing from \eqref{eq:D8} we have
\begin{equation}
\begin{split}
\langle p_r \rangle_{P^\prime} &= \sum_i\sum_{x\in S^r_{x^\prime_i}} P^\prime_{XF}(x, f_i)\\
& \geq \sum_i\sum_{x\in S^r_{x_i}} P^\prime_{XF}(x, f_i)\\
& \geq \left(\sum_i\sum_{x\in S^r_{x_i}} P_{XF}(x, f_i)\right) - \epsilon\\
&= \langle p_r \rangle_{P} -\epsilon.
\end{split}
\end{equation}
So, following her optimal strategy, we can bound Eve's average probability of making fewer than $r$ mistakes when guessing the bit values of $X$ as
\begin{equation} \label{eq:result2}
\begin{split}
\langle p_r \rangle_{P} &\leq \langle p_r \rangle_{P^\prime}+\epsilon \\
& \leq b^r_n 2^{-H_{min}(X|F)_{P^\prime}} + \epsilon \\
& \leq b^r_n 2^{-H^\epsilon_{min}(X|E)_{\rho}} + \epsilon,
\end{split}
\end{equation}
where we have used $H_{min}(X|F)_{P^\prime} = H^\epsilon_{min}(X|F)_{P} \geq H^\epsilon_{min}(X|E)_{\rho}$. 

\section{Security against repudiation}\label{App:AppendixB}
Alice aims to send a declaration $(m, Sig_m)$ which Bob will accept and which Charlie will reject. For this to happen, Bob must accept both the elements that Alice sent directly to him, and the elements that Charlie forwarded to him. In order for Charlie to reject he need only reject either the elements he received from Alice, or the elements Bob forwarded to him (or both). Intuitively, security against repudiation follows because of the symmetrisation performed by Bob and Charlie using the secret classical channel. In the distribution stage, to send the future message $m$, Alice will use the KGP with Bob and Charlie so that they hold the strings $(b_1, ..., b_L)$ and $(c_1, ..., c_L)$ respectively. We give Alice full power and assume that later on, in the messaging stage, she is able to fully control the number of mismatches her signature declaration contains with $(b_1, ..., b_L)$ and $(c_1, ..., c_L)$. Call the mismatch rates $e_B$ and $e_C$ respectively. Now, the symmetrisation process means that Bob and Charlie will randomly (and unknown to Alice) receive $L/2$ elements of the {each} other's strings. We aim to show that any choice of $e_C, e_B$ leads to an exponentially decaying probability of repudiation.\\
\\Suppose that $e_C > s_a$. In this case, Bob is selecting (without replacement) $L/2$ elements from the set $\{c_1, ..., c_L\}$, which contains exactly $e_CL$  mismatches with Alice's future declaration. The number of mismatches Bob selects then follows a hypergeometric distribution $H(L, e_CL, L/2)$ with expected value $e_CL/2$. In order to accept the message, Bob must select fewer than $s_aL/2$ errors. Using \cite{Chvatal} we can bound the probability that Bob selects fewer than $s_aL/2$ mismatches as
\begin{equation}
\begin{split}
&\mathbb{P}(\text{Bob gets fewer than $s_aL/2$ mismatches from Charlie})\\
& \leq \exp[-(e_C-s_a)^2L].
\end{split}
\end{equation}
To repudiate, Alice must make Bob accept the message, which means that Bob must accept both the part received from Alice and the part received from Charlie. Since $\mathbb{P}(A \cap B) \leq \min\{\mathbb{P}(A), \mathbb{P}(B)\}$, the probability of repudiation must be less than or equal to the above expression, and so must also decrease exponentially.\\
\\Suppose that $e_C \leq s_a$. In this case, if $e_B>s_a$, the above argument shows that it is highly likely that Bob will reject the message, so we consider only the case where $e_B \leq s_a$. Consider first the set $\{b_1, ..., b_L\}$. We can use the same arguments as above to bound the probability of selecting more than $s_vL/2$ mismatches as
\begin{equation}
\begin{split}
&\mathbb{P}(\text{Charlie gets more than $s_vL/2$ mismatches from Bob}) \\
& \leq \exp[-(s_v-e_B)^2L].
\end{split}
\end{equation}
Alice succeeds if Charlie selects more than $s_vL/2$ mismatches from either the set $\{b_1, ..., b_L\}$ or the set $\{c_1, ..., c_L\}$. Using $\mathbb{P}(A \cup B) \leq \mathbb{P}(A) + \mathbb{P}(B)$, we can see that, for the choice of $e_B, e_C \leq s_a$, we have
\begin{equation}
\begin{split}
&\mathbb{P}(\text{Charlie gets more than $s_vL/2$ mismatches}) \\
& \leq 2\exp[-(s_v-s_a)^2L].
\end{split}
\end{equation}
So again, the probability of Alice successfully repudiating decreases exponentially in the size of the signature. Similar to \cite{QDSwQKD}, Alice's best strategy would be to pick $e_B = e_C = \frac{1}{2}(s_v+s_a)$, in which case
\begin{equation}
\mathbb{P}(\text{Repudiation}) \leq 2\exp\left[-\frac{1}{4}(s_v-s_a)^2L\right].
\end{equation}

\section{Calculation of the number of quantum transmissions required per signed bit}\label{App:AppendixC}

\subsection{Parameters and constraints}
The correctness and security of the protocol depends on the three
equations \eqref{eq:honab}, \eqref{eq:forge} and \eqref{eq:rep},
which 
in turn depend on the choice of parameters $s_a$ and $s_v$. The
parameters must be such that $e^U_X < s_a < s_v < p_E$. Here, and in
all that follows, $e^U_X$ is the maximum of the worst-case error rates
Alice makes with Bob's key (found from the Alice-Bob KGP), and the
worst-case error rates Alice makes with Charlie's key (found from the
Alice-Charlie KGP). Similarly, $p_E$ is the minimum of the
eavesdropper's error rates found from the Alice-Bob and
Alice-Charlie KGP. 
The aim is to choose the parameters that minimise the number of quantum transmissions required per signed bit.
 Note that the number of quantum transmissions required per signed bit is not equal to the signature length, $L$.
  In general, due to channel losses and parameter estimation procedures, Bob will have to transmit more than $L$ quantum states to generate a signature of length $L$.

In the next section, we will calculate the length of the signature and the number of quantum transmissions
necessary to sign a message with a security level of $10^{-4}$. By
this, we mean that the probabilities of honest abort, 
forging, and repudiation, given respectively by \eqref{eq:honab},
\eqref{eq:forge} and \eqref{eq:rep}, are all less than $10^{-4}$. To find the length, {per possible one-bit message 0, 1}, of
the signature necessary to securely sign a 
one-bit message, we must first choose the parameters $s_a$ and $s_v$. 
Ideally, our choice would minimise the total length of the signature, $L$. We choose to set $\epsilon_{PE} = 10^{-5}$ and
\beq
s_a = e^U_X + \frac{p_E-e^U_X}{3}, \:\:\:\:\:\: s_v = e^U_X + \frac{2(p_E-e^U_X)}{3}.
\eeq
We note here that this may not be the optimal choices of these parameters, however, it seems natural to choose the parameters in order to equally partition the gap between $e^U_X$ and $p_E$. Nevertheless, more sophisticated optimisation of the parameters may lead to better results.

\subsection{The number of quantum transmissions required per signed bit}
In this section, we use experimental data provided by
\cite{Lucamarini} to give a rough estimate of the number of states {(per possible message bit value)}
Bob needs to transmit over a
$50$ km quantum channel to securely sign a one-bit message.
We set $\epsilon_{PE} = 10^{-5}$ in all equations that follow. The experiment in \cite{Lucamarini} approximately achieves the values
\begin{itemize}
\item Source: 1GHz pulse rate
\item Basis probabilities: $p_X = 93.75\%$, $p_Z=6.25\%$.
\item Intensity levels:\\ $(u_1, u_2, u_3) = (0.425, 0.0435, 0.0022)$.
\item Dark count rate: $p_d = 2.1\times10^{-5}$
\item Detector Efficiency: $\eta_{det}=20.4\%$
\item Channel attenuation: $0.2$dB/km
\item Receiver loss: $2.8$dB
\item Optical bit error rate: $X$ basis $Q_X=1.38\%$, $Z$ basis $Q_Z=0.76\%$
\end{itemize}

As in \cite{CaiScarani}, we model the detection rates for intensity $u_k$ as 
\beq \label{eq:det}
R_{u_k} = 1-(1-2p_d)e^{-u_k\eta}
\eeq
and the $Z$ basis bit error rates as
\beq \label{eq:err}
e_{Z,u_k} = \frac{(1-e^{-u_k\eta})Q_Z + e^{-u_k\eta}p_d}{R_{u_k}},
\eeq
and similarly for the X basis bit error rates, but using $Q_X$ in place of $Q_Z$. Over $50$~km, the attenuation due to channel and detector loss is $\eta_{ch} = 10^{-1.28} = 0.0525$. $\eta$ represents the overall transmission in the system, with $\eta = \eta_{det}\eta_{ch} = 0.0107$.

If we choose intensities with probabilities $p_{u_1}=25\%$, $p_{u_2}=40\%$ and $p_{u_3}=35\%$, then if Bob transmits $6.3\times10^{8}$ states in total, we expect the raw key to contain $8.10\times 10^{5}$ bit values from $X$ basis measurement outcomes. Of these, Bob will randomly choose $L/2= 3.86\times 10^{5}$ to be $X_{B, keep}$, another $L/2$ will be used as $X_{B, forward}$ and the remaining $k = 3.86\times 10^{4}$ will be used to estimate the correlation between Alice and Bob's X basis measurement outcomes.

For the given intensity choice probabilities and error rates, we expect to observe an $X$ basis bit error rate of $2.87\%$. We can then use Eq. \eqref{eq:pe} to upper bound the true error rate as $e^U_X = 4.02\%$.

Using Appendix \ref{App:estimates} and the detection/error rates given by \eqref{eq:det}, \eqref{eq:err} above, we can calculate the min-entropy. Setting $\epsilon=10^{-10}$ we use \eqref{eq:coh_min_entr} to find
\beq \label{eq:num_min_ent}
H^\epsilon_{min}(X|E) = 1.40\times10^5,
\eeq
Using \eqref{eq:EveError} we find $p_E = 6.96\%$, and so have $s_a = 4.99\%$ and $s_v=5.96\%$. Setting also $a=10^{-5}$ and putting these values into equations \eqref{eq:honab}, \eqref{eq:forge} and \eqref{eq:rep} we find

\beq
\mathbb{P}(\text{Honest Abort}) \leq 2\epsilon_{PE} = 2.00 \times 10^{-5},
\eeq

\beq
\mathbb{P}(\text{Forge}) \leq \epsilon_F + a + 8\epsilon_{PE} = 1.00\times 10^{-4},
\eeq

\beq
\begin{split}
\mathbb{P}(\text{Repudiation}) &\leq 2\exp\left(-\frac{1}{4}(s_v-s_a)^2L\right) \\
&= 2.97\times 10^{-8}.
\end{split}
\eeq
Thus we can see that when $6.3\times10^8$ states are transmitted, the protocol is secure to a level of $10^{-4}$. It should be stressed that this analysis is rough, and has not been optimised.

\bibliographystyle{apsrev4-1}

\end{document}